\newtheorem{theorem}{Theorem}
\newtheorem{proposition}{Proposition}
\newtheorem{remark}{Remark}
\DeclareMathOperator{\KS}{\mathit{C}\,}
\DeclareMathOperator{\KP}{\mathit{K}\,}
\begin{document}

\title{An additivity theorem for plain Kolmogorov complexity
\thanks{B. Bauwens is
supported by Fundaçao para a Ciência e a Tecnologia by grant 
(SFRH/BPD/75129/2010), and is also partially supported by 
project $CSI^2$ (PTDC/EIAC/099951/2008).
A. Shen is supported in part by projects NAFIT ANR-08-EMER-008-01 grant and RFBR 09-01-00709a. Author are grateful to their colleagues for interesting discussions and to the anonymous referees for useful comments.}} 


\author{Bruno Bauwens\footnote{ 
  Instituto de Telecomunicações Faculdade de Ci\^encia at Porto University, 
  R.Campo Alegre, 1021/1055, 4169-007 Porto - Portugal.
  }
        \and Alexander Shen\footnote{
  Laboratoire d'Informatique, de Robotique et de Microlectronique de Montpellier, 
  UMR 5506 - CC 477, 161 rue Ada,  34095 Montpellier Cedex 5 France. 
  }
}



\maketitle

\begin{abstract}
We prove the formula $\KS(a,b)=\KP(a|\KS(a,b))+\KS(b|a,\KS(a,b))+O(1)$ that expresses the plain
complexity of a pair in terms of prefix-free and plain conditional complexities of its components.
\end{abstract}

\vspace{1cm}

The well known formula from Shannon information theory states that $H(\xi,\eta)=H(\xi)+H(\eta|\xi)$.
Here $\xi,\eta$ are random variables and $H$ stands for the Shannon entropy. A similar formula for
algorithmic information theory was proven by Kolmogorov and Levin~\cite{kolm68} and says that
\begin{equation*}
  \KS(a,b)=\KS(a)+\KS(b|a)+O(\log n),
\end{equation*}
where $a$ and $b$ are binary strings of length at most $n$
and $\KS$ stands for Kolmogorov complexity (as defined initially by Kolmogorov~\cite{kolm65}; now
this version is usually called \emph{plain} Kolmogorov complexity). Informally, $\KS(u)$ is the
minimal length of a program that produces $u$, and $\KS(u|v)$ is the minimal length of a program
that transforms $v$ to $u$; the complexity $\KS(u,v)$ of a pair $(u,v)$ is defined as the complexity
of some standard encoding of this pair. 

This formula implies that $I(a:b)=I(b:a)+O(\log n)$ where $I(u:v)$ is the amount of information in
$u$ about $v$ defined as $\KS(v)-\KS(v|u)$; this property is often called ``symmetry of
information''. The term $O(\log n)$, as was noted in~\cite{kolm68}, cannot be replaced by $O(1)$.
Later Levin found an $O(1)$-exact version of this formula that uses the so-called \emph{prefix-free}
version of complexity: 
\begin{equation*}
\KP(a,b)=\KP(a)+\KP(b|a,\KP(a))+O(1);
\end{equation*}
this version, reported in \cite{Gacs73}, was also discovered by Chaitin~\cite{chaitin75}. 
In the definition of prefix-free complexity we restrict ourselves to self-delimiting programs:
reading a program from left to right, the interpreter determines where it ends. See, e.g.,
\cite{shen00} for the definitions and proofs of these results.

In this note we provide a somewhat similar formula for plain complexity (also with $O(1)$-precision):

\begin{theorem}\label{main}
  \begin{equation*}
   \KS(a,b) = \KP(a|\KS(a,b)) + \KS(b|a, \KS(a,b))  + O(1) \,.
  \end{equation*}
  \label{th:additivityCK}
\end{theorem}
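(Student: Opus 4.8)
The plan is to fix $n=\KS(a,b)$ and prove the two inequalities separately, each with $O(1)$ precision. The first, $\KP(a|n)+\KS(b|a,n)\le\KS(a,b)+O(1)$, I would obtain by a counting argument uniform in $n$. Given $n$, dovetail all programs of length at most $n$ and enumerate the (at most $2^{n+1}$) pairs $(a,b)$ with $\KS(a,b)\le n$. For a fixed first component $a$, let $N_a$ be the number of enumerated partners, so that $\sum_a N_a\le 2^{n+1}$. Describing $b$ by its index among the partners of $a$ gives $\KS(b|a,n)\le\log N_a+O(1)$. For the first component I would feed the Kraft--Chaitin machine (conditioned on $n$) a request of length $n-j$ for $a$ each time $N_a$ reaches $2^{j}$; the total requested weight is $\sum_a\sum_{j\le\log N_a}2^{-(n-j)}\le 2\cdot 2^{-n}\sum_a N_a=O(1)$, so $\KP(a|n)\le n-\log N_a+O(1)$. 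Adding the two bounds cancels the $\log N_a$ term and yields the inequality.

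For the reverse inequality $\KS(a,b)\le\KP(a|n)+\KS(b|a,n)+O(1)$, the natural idea is to concatenate a shortest self-delimiting program $p$ for $a$ given $n$ with a shortest plain program $q$ for $b$ given $(a,n)$. Since the programs for the first component form a prefix-free set, a decoder that knows $n$ can split $pq$ into $p$ and $q$ without extra markers, recover $a$, and then $b$; this gives the \emph{conditional} bound $\KS(a,b|n)\le\KP(a|n)+\KS(b|a,n)+O(1)$ immediately.

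The main obstacle is removing the condition $n$ with only $O(1)$ loss: the generic estimate $\KS(a,b)\le\KS(a,b|n)+\KP(n)+O(1)$ wastes $\KP(n)\approx\log n$ bits, exactly what we cannot afford. I expect the resolution to exploit that $n=\KS(a,b)$ is not an arbitrary parameter but is determined by the very object being described. Concretely, I would build a single decompressor $D$ that, on input $w$ and for each candidate value $m'$ of $n$, uses the $m'$-conditioned prefix machine to read a self-delimiting initial segment $p$ of $w$, runs it to obtain a candidate $a'$, and decodes the remaining suffix as a plain program for a candidate $b'$ given $(a',m')$. The recursion theorem would then let $D$ refer to its own complexity and retain the self-consistent $m'$ for which the decoded pair has complexity $m'$; for the true $n$ this recovers $(a,b)$ from a word of length $\KP(a|n)+\KS(b|a,n)$, so minimality of $\KS(a,b)$ forces the bound. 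Turning this self-referential selection into a halting, $O(1)$-correct procedure is the delicate step, and it is here that using \emph{prefix} complexity for the first component rather than plain complexity is essential: its surplus over $\KS(a|n)$, again about $\log n$, is what pays for identifying $n$.
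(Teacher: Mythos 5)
Your first half---the bound $\KP(a|n)+\KS(b|a,n)\le \KS(a,b)+O(1)$ for $n=\KS(a,b)$---is correct and is essentially the paper's own argument for that direction: the paper packages what you do with Kraft--Chaitin requests as a lower-semicomputable semimeasure $P(x|n)=N_x2^{-n-1}$ and invokes the coding theorem to get $\KP(a|n)\le n-\log_2 N_a+O(1)$, and it bounds $\KS(b|a,n)\le\log_2 N_a+O(1)$ by exactly the indexing you use. So that direction stands.

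The gap is in the other direction, at precisely the step you flag as ``delicate'': as described, it is not delicate but unworkable. A decompressor cannot ``retain the self-consistent $m'$ for which the decoded pair has complexity $m'$'': Kolmogorov complexity is not computable (only upper semicomputable), and the recursion theorem gives a program access to its own \emph{text}, not to values of $\KS$, so no halting procedure can perform the test $\KS(a',b')=m'$. Even ignoring effectivity, several values $m'$ could pass the test while decoding $w$ into different pairs, so the map from $w$ to a pair is not even well defined, and outputting the pair for, say, the least consistent $m'$ gives no bound on $\KS(a,b)$ itself. Also, the heuristic that the surplus $\KP(a|n)-\KS(a|n)\approx\log n$ ``pays for'' identifying $n$ is unfounded: that surplus can be $O(1)$, and the actual proof does not use it. The paper's resolution requires no self-reference. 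Let $d=\KS(a,b)-|p|-|q|$ be the (hypothetical) defect and describe the pair by $\bar d\,p\,q$, where $\bar d$ is a self-delimiting code for $d$ of length $O(\log d)$. The decoder reads $\bar d$, sees the length $|p|+|q|$ of what remains, and computes $n=|p|+|q|+d$ \emph{before} it has to parse anything else; only then does it use prefix-freeness given $n$ to split off $p$, obtain $a$, and run $q$ on $(a,n)$ to obtain $b$. Hence $\KS(a,b)\le O(\log d)+|p|+|q|+O(1)$, i.e.\ $d\le O(\log d)+O(1)$, which forces $d=O(1)$. In short: instead of making the decoder guess $n$, you hand it the defect, which is exactly the quantity the argument then proves to be $O(1)$. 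If you want a rigorous version of your fixed-point intuition, the paper also sketches an alternative derivation (due to Gacs) from Levin's result that $\KP(a|i)=i+O(1)$ implies $\KS(a)=i+O(1)$; but that argument is carried out at the level of complexity identities, not inside a decompressor.
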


\begin{proof}
   The proof is not difficult after the formula is presented. The $\le$-inequality is a
   generalization of the inequality $\KS(x,y)\le\KP(x)+\KS(y)$ and can be proven in the same way.
   Assume that $p$ is a self-delimiting program that maps $\KS(a,b)$ to $a$, and $q$ is a (not
   necessarily self-delimiting) program that maps $a$ and $\KS(a,b)$ to $b$. The natural idea is to
   concatenate $p$ and $q$; since $p$ is self-delimiting, given $pq$ one may find where $p$ ends and $q$ starts, and then use $p$ to get $a$ and $q$ to get $b$.  However, this idea needs some refinement: in both cases we need to know $\KS(a,b)$ in advance; one may use the length of $pq$ as a replacement for it, but since we have not yet proven the equality, we have no right to do so.
   
So more caution is needed. Assume that the $\le$-inequality is not true and $\KS(a,b)$ exceeds $\KP(a|\KS(a,b)) + \KS(b|a, \KS(a,b))$ by some $d$. Then we can concatenate prefix-free description $\bar {d}$ of $d$ (that has length $O(\log d)$), then $p$ and then $q$. Now we have enough information: first we find $d$, then $\KS(a,b)=|p|+|q|+d$, then $a$, and finally $b$. Therefore $\KS(a,b)$ does not exceed $O(\log d)+|p|+|q|+O(1)$, therefore $d\le O(\log d)+O(1)$ and $d=O(1)$. The $\le$-inequality is proven.

Let us prove the reverse inequality. In this proof we use the interpretation of prefix-free
complexity as the logarithm of a priori probability (see, e.g., \cite{shen00} for details). If
$n=\KS(a,b)$ is given, one can start enumerating all pairs $(x,y)$ such that $\KS(x,y)\le n$; there
are at most $2^{n+1}$ of them and the pair $(a,b)$ is among them. For fixed $x$, for each pair $(x,y)$ in this
enumeration we add $2^{-n-1}$ to the probability of $x$; in this way we approximate (from below) the
semimeasure $P(x|n)=N_x2^{-n-1}$. Therefore, we get an upper bound for $\KP(a|n)$:
\begin{equation*}
    \KP(a|n)\le -\log P(a|n) + O(1) = n - \log_2 N_a+O(1) \,,
  \end{equation*}
where $N_a$ is the number of $y$'s such that $\KS(a,y)\le n$. On the other hand, given $a$ and $n$, we
can enumerate all these $y$, and $b$ is among them, so $b$ can be described by its ordinal number in
this enumeration, therefore
\begin{equation*}
   \KS(b|a,n)\le \log_2 N_a+O(1) \,.
 \end{equation*}
Summing these two inequalities, we get the desired result.       
\end{proof}

We can now get several known $O(1)$-equalities for complexities as corollaries of Theorem~\ref{main}.

\begin{itemize}
\item Recall that $\KS(a,\KS(a))=\KS(a)$, and $\KP(a,\KP(a))=\KP(a)$ (the $O(1)$-additive terms are omitted here and below), since the shortest program for $a$ also describes its own length.
\item For empty $b$ we get $\KS(a)=\KP(a|\KS(a))$, see also \cite{GacsNotes,LiVitanyi}. 
\item For empty $a$ we get $\KS(b)=\KS(b|\KS(b))$, see also \cite{GacsNotes,LiVitanyi}.
\item The last two equalities imply that $\KS(u|\KS(u))=\KP(u|\KS(u))$. 

The direct proof for last three statements is also easy. To show that $\KS(a)\le \KS(a|\KS(a))$,
assume that some program $p$ maps $\KS(a)$ to $a$ and is $d$ bits shorter than $\KS(a)$. Then we add
a prefix $\bar d$ of length $O(\log d)$ that describes $d$ in a self-delimiting way, and note that
$\bar d p$ determines first $\KS(a)$ and then $a$, so $d\le O(\log d)+O(1)$ and $d=O(1)$. To show
that $\KP(a|\KS(a))\le \KS(a|\KS(a))$ we note that in the presence of $\KS(a)$ every program of
length $\KS(a)$ can be considered as a self-delimiting one, since its length is known. 

Levin also pointed out that $\KS(a)$ can be defined in terms of prefix-free complexity as a minimal
$i$ such that $\KP(a|i)\le i$. (Indeed, for $i=\KS(a)$ both sides differ by $O(1)$, and changing
right hand side by $d$, we change left hand side by $O(\log d)$, so the intersection point is unique
up to $O(1)$-precision. In other terms, $\KP(a|i) = i+O(1)$ implies $\KS(a) = i+O(1)$.)
   
\item More generally, we may let $a$ be some fixed computable function of $b$: if $a=f(b)$, we get
  $\KS(b)=\KP(f(b)|\KS(b))+\KS(b|f(b),\KS(b))$.  

\end{itemize}

One can also see that Theorem~\ref{main} can be formally derived from Levin's results mentioned above. To show that 
\begin{equation*}
\KS(b|a,\KS(a,b))= \KS(a,b)-\KP(a|\KS(a,b))    
\end{equation*}
we need to show that the right hand side $i=\KS(a,b)-\KP(a|\KS(a,b))$ satisfies the equality
$\KP(b|a,\KS(a,b),i)= i$ with $O(1)$-precision, which implies $\KS(b|a,\KS(a,b)) = i$.
(We omit all $O(1)$-terms, as usual.) In the
condition of the last inequality we may replace $i$ by $\KP(a|\KS(a,b))$ since $\KS(a,b)$ is already
in the condition. Therefore, we need to show that
\begin{equation*}
\KP(b|a,\KS(a,b),\KP(a|\KS(a,b)))= \KS(a,b)-\KP(a|\KS(a,b))
\end{equation*}
or
\begin{equation*}
\KP(b|a,\KS(a,b),\KP(a|\KS(a,b)))+\KP(a|\KS(a,b))= \KS(a,b).
\end{equation*}
But the sum in the left hand side equals $\KP(a,b|\KS(a,b))$ due to the formula for prefix
complexity of a pair $(a,b)$ relativized to the condition $\KS(a,b)$, and it remains to note that
$\KP(a,b|\KS(a,b))=\KS(a,b)$. (This alternative proof was suggested by Peter Gacs.)
   
We can obtain a different version of Theorem~\ref{main}:

\begin{proposition}
\begin{equation*}
 \KS(a,b) = \KP(a|\KS(a,b)) + \KS(b|a, \KP(a|\KS(a,b)))  + O(1) \,.     
 \end{equation*}
 \end{proposition}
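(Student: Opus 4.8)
The plan is to follow the same Gács-style reduction to Levin's results that the excerpt uses for the alternative proof of Theorem~\ref{main}, rather than to redo a two-sided counting argument. Write $n=\KS(a,b)$, $k=\KP(a|n)$, and set $i=n-k$. After rearranging $n=k+\KS(b|a,k)$, the Proposition is exactly the assertion $\KS(b|a,k)=i+O(1)$, so the whole task is to evaluate the conditional plain complexity $\KS(b|a,k)$, i.e. to show that conditioning on $k$ is as good as conditioning on $n$.

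First I would appeal to the relativized form of Levin's characterization quoted in the excerpt, namely that $\KP(u|w,j)=j+O(1)$ implies $\KS(u|w)=j+O(1)$; here I take $w=(a,k)$, $u=b$, and $j=i$. Thus it suffices to prove $\KP(b|a,k,i)=i+O(1)$. The key bookkeeping observation is that, once $a$ and $k$ are placed in the condition, supplying $i$ is the same as supplying $n$: since $n=k+i$ and $i=n-k$ are each computable from the other given $k$, we have $\KP(b|a,k,i)=\KP(b|a,k,n)+O(1)$, and it is enough to compute $\KP(b|a,k,n)$.

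Next I would invoke the prefix-complexity pair formula relativized to the condition $n$, exactly as Gács does in the excerpt: $\KP(a,b|n)=\KP(a|n)+\KP(b|a,\KP(a|n),n)+O(1)=k+\KP(b|a,k,n)+O(1)$. Combined with $\KP(a,b|n)=n+O(1)$ (that is, $\KP(a,b|\KS(a,b))=\KS(a,b)$), this yields $\KP(b|a,k,n)=n-k=i+O(1)$, hence $\KP(b|a,k,i)=i+O(1)$, which is precisely what the relativized Levin characterization requires. Feeding this back gives $\KS(b|a,k)=i=\KS(a,b)-\KP(a|\KS(a,b))+O(1)$, and rearranging produces the Proposition.

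The point where care is needed — and the step I expect to be the main obstacle — is the relativization itself: one must check that Levin's equivalence and the prefix pair formula survive verbatim when an arbitrary string $w=(a,k)$ is inserted into all conditions, and that the interchange of $i$ with $n$ in the condition is legitimate with $O(1)$ precision. I expect this to be routine rather than deep. I would also briefly remark why the more elementary route, imitating the direct proof of Theorem~\ref{main}, is less convenient: the inequality $\KP(a|n)+\KS(b|a,k)\ge n$ still follows from the self-delimiting concatenation $\bar d\,p\,q$ (now $|p|=\KP(a|n)=k$ can be read off from the self-delimiting $p$), but the reverse inequality would require describing $b$ from $(a,k)$ using only $n-k$ bits, and from $(a,k)$ alone one cannot pin down the threshold $n=\KS(a,b)$ needed to enumerate $\{y:\KS(a,y)\le n\}$; the natural attempts to average over candidate thresholds do not converge. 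This is exactly why the reduction to Levin's results, which sidesteps recovering $n$, is the cleaner path.
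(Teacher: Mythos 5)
Your argument is correct, but it is not the route the paper takes. The paper keeps the two-inequality structure: the $\le$-direction is the concatenation argument you sketch, and the $\ge$-direction is obtained directly from Theorem~\ref{th:additivityCK} by a short program-manipulation trick. Namely, let $p$ be a shortest plain program computing $b$ from $(a,\KS(a,b))$; by the theorem, $|p|+\KP(a|\KS(a,b))=\KS(a,b)+O(1)$, so a decoder given $a$ and $k=\KP(a|\KS(a,b))$ can reconstruct $n=\KS(a,b)$ from $|p|+k$ (the bounded correction costs $O(1)$ extra bits) and then run $p$; hence $\KS(b|a,k)\le\KS(b|a,n)+O(1)$, and the theorem finishes the job. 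This also answers the worry in your closing paragraph: the elementary route does not need to recover $n$ from $(a,k)$ alone, nor to enumerate $\{y:\KS(a,y)\le n\}$ over candidate thresholds --- the decoder reads $n$ off the length of the description it is given, exactly the move used elsewhere in the paper. Your proof instead transplants the G\'acs-style reduction (which the paper presents only as an alternative proof of Theorem~\ref{th:additivityCK}) to the condition $w=(a,k)$: the relativized Levin characterization ($\KP(u|w,j)=j+O(1)$ implies $\KS(u|w)=j+O(1)$), the interchange of $i$ and $n$ in the condition (legitimate, since $n=k+i$ makes them computably equivalent given $k$), the relativized prefix pair formula, and $\KP(a,b|\KS(a,b))=\KS(a,b)+O(1)$. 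Each of these relativizes uniformly, so your proof is sound; its distinctive advantage is that it never invokes Theorem~\ref{th:additivityCK} itself, giving an independent derivation of the Proposition, whereas the paper's proof is shorter but leans on the theorem.
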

 \begin{proof} Indeed, the $\leq$-inequality can be shown in the same way as the $\leq$-inequality  in the proof of Theorem~\ref{th:additivityCK}, hence it remains to show the $\geq$-inequality.
Let $p$ be a program of length $\KS(b|a,\KS(a,b))$ that computes $b$ given $a$ and
$\KS(a,b)$. (The program $p$ is not assumed to be self-delimiting.) Knowing $p$, we can also compute
$b$ given $a$ and $\KP(a|\KS(a,b))$. First, we compute $|p| + \KP(a|\KS(a,b))$, and this sum equals
$C(a,b)$ (Theorem~\ref{main}). Then, using $a$ again, we compute $b$. Hence $\KS(b|a, \KS(a,b)) \geq
\KS(b|a, \KP(a|\KS(a,b)))$.
\end{proof}

One may complain that Theorem \ref{th:additivityCK} is a bit strange since it uses prefix-free
complexity in one term and plain complexity in the second (conditional) part. As we have already
noted, one cannot use $\KS$ in both parts: $\KS(a,b)$ can exceed even $\KS(a)+\KS(b)$ by a
logarithmic term. One may then ask whether it is possible to exchange plain and prefix-free complexity in the two terms we have and prove that $\KS(a,b)$ equals something like
\begin{equation*}
\KS(a|\KS(a,b)) + \KP(b|a,\KS(a,b)).
\end{equation*}
It turns out that it is not possible: even the inequality $\KS(a,b)\le\KS(a)+\KP(b|a)+O(1)$ is not true. At first it seems that one could concatenate a self-delimiting program $q$ that produces $b$ given $a$ and a (plain) program $p$ that produces $a$, in the hope that the endpoint of $q$ can be reconstructed, and then the rest is $p$. However, this idea does not work: the program $q$ is self-delimiting only when $a$ is known; to know $a$ we need to have $p$, and to know $p$ we need to know where $q$ ends, so there is a vicious circle here.

Let us show that the problem is unavoidable and that for infinitely many pairs $(x,y)$ we have 
\begin{equation*}
\KS(x,y) \geq  \KS(x) + \KP(y|x) + \log n - 2 \log \log n-O(1),
\end{equation*}
where $n=|x|+|y|$ is the total length of both strings. To construct such a pair, let $n = 2^k$ for some $k$, and choose a string $r$ of length $n$ and a natural number $i<n$ such that $\KS(r,i) \geq n + \log n$. (For every $n$, there are $n2^n$ pairs $(r,i)$, so one of them has high complexity.)

Let $x = r_1\dots r_i$ and $y = r_{i+1}\dots r_n$. Note that $\KS(x,y) = \KS(r,i) \geq n + \log n$
and that $\KS(x) \leq i$. Furthermore, $\KP(y|x) \le \KP(y|x,n)+\KP(n)$. Here $\KP(y|x,n)\le
|y|=n-i$, since $x$ and $n$ determine $|y|$ and $\KP(y\mid |y|)\le |y|$; on the other hand,
$\KP(n)\le 2\log\log n$.\footnote{As a byproduct of this example and the discussion above we conclude that $\KP(x|y)$ cannot be defined as minimal prefix-free complexity of a program that maps $y$ to $x$: the value $\KP(y|x)$ can be smaller than $\min \left\{ K(p): U(p,x) = y \right\}$, where $U$ is the universal function. Indeed, in this case we would have the inequality $\KS(x,y)\le\KS(x)+\KP(y|x)$, since the prefix-free description of a program that maps $x$ to $y$ and a shortest description for $x$ can be concatenated into a description of the pair $(x,y)$.}

\medskip

There is still some chance to get a formula for the plain complexity of a pair $(x,y)$ that involves only plain complexities, assuming that we add some condition in the left hand side, i.e., to get some formula of the type $\KS(a,b|?)=?$. Unfortunately, the best result in this direction that we managed to get is the following observation:

     \begin{proposition}
For all $x,y$ there exists a \textup(unique up to $O(1)$-precision\textup) pair $(k,l)$ such that $\KS(x|l) = k$, $\KS(y|x,k) =l$. For such a pair we have $\KS(x|l)=k$, $\KS(y|x,k) = l$ and  this implies $\KS(x,y|k,l)=\KS(x,y|k) = \KS(x,y|l) = l+k$ \textup(all with $O(1)$-precision\textup).
     \end{proposition}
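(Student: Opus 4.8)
The plan is to reduce both defining equations to Levin's fixed-point characterization of plain complexity, namely $\KP(z|i,c)=i \iff \KS(z|c)=i$ (up to $O(1)$), and then treat the pair $(k,l)$ as a fixed point of a composed map. First I would rewrite the system: by Levin's result relativized to the condition $l$, the equation $\KS(x|l)=k$ is equivalent to $\KP(x|k,l)=k$, and relativized to $(x,k)$ the equation $\KS(y|x,k)=l$ is equivalent to $\KP(y|x,k,l)=l$. This $\KP$-form is what makes the rest of the argument clean, since prefix complexity obeys the relativized additivity formula already used above.

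For existence I would argue by a discrete intermediate-value / sign-change argument. Set $f(l)=\KS(x|l)$ and $g(k)=\KS(y|x,k)$ and look for a near-zero of $D(l)=g(f(l))-l$. The key elementary fact is that $f$ and $g$ are $O(1)$-Lipschitz in unit steps: since $l$ and $l+1$ are computable from each other, $|\KS(x|l+1)-\KS(x|l)|=O(1)$, and likewise for $g$; hence $D$ changes by $O(1)$ at each step. Since $D(0)=g(f(0))\ge 0$ while $D(l)<0$ once $l$ exceeds the bound $\KS(y|x)+O(1)$ on $g$, the function $D$ must pass within $O(1)$ of zero at some $l^{*}$. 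Taking $k=f(l^{*})$ then gives $\KS(x|l^{*})=k$ exactly and $\KS(y|x,k)=l^{*}+O(1)$, a solution. I would be careful to run this on the honest quantity $D$ rather than on $\KS(x,y|l)-\KS(x|l)$, so as to avoid the pitfall that $\KS(y|x,k,l)$ and $\KS(y|x,k)$ need not agree once $l$ is dropped from the condition.

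For the three equalities I would work entirely in the $\KP$-form. Adding the two equations and using the relativized additivity $\KP(x,y|k,l)=\KP(x|k,l)+\KP(y|x,\KP(x|k,l),k,l)+O(1)$ — where $\KP(x|k,l)=k$ is already present in the condition — gives $\KP(x,y|k,l)=k+l$. Now for each condition $c\in\{(k,l),k,l\}$ I would test the candidate index $j=k+l$ in Levin's equation $\KP(x,y|j,c)=j$: in each case $j$ together with $c$ is interdefinable with $(k,l)$ (for $c=l$ we recover $k=j-l$, for $c=k$ we recover $l=j-k$), so $\KP(x,y|j,c)=\KP(x,y|k,l)+O(1)=k+l=j$, and Levin's characterization yields $\KS(x,y|c)=k+l+O(1)$. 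This delivers all three equalities at once.

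The hard part will be uniqueness to $O(1)$ precision. Here I would use the forward implication only: any solution $(k,l)$ satisfies $\KS(x,y|l)=k+l$ (the equality just proved with $c=l$) together with $\KS(x|l)=k$, hence $l$ solves the single-variable equation $H(l)=l$ with $H(l):=\KS(x,y|l)-\KS(x|l)$. The delicate point is that the naive $O(1)$-per-step Lipschitz bound only yields $O(\log)$ uniqueness; to reach $O(1)$ I would instead exploit the sharper \emph{sublinear} modulus $|H(l+\Delta)-H(l)|=O(\log\Delta)$, which holds because $l$ and $l+\Delta$ are mutually computable at cost $O(\log\Delta)$ (encoding $\Delta$ once, not step by step). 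Then $H(l)-l$ strictly decreases over any distance $\Delta\gg\log\Delta$, so two solutions $l_{1}<l_{2}$ would force $(l_{2}-l_{1})-O(\log(l_{2}-l_{1}))=O(1)$, i.e. $l_{2}-l_{1}=O(1)$; the matching $k$ is then pinned down by the unit-step Lipschitz property of $f$. Establishing this logarithmic (rather than linear) modulus of continuity, and verifying that it is exactly what upgrades uniqueness from $O(\log)$ to $O(1)$, is the step I expect to require the most care.
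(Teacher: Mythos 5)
Your proposal is correct, but it takes a genuinely different route from the paper's proof. The paper treats $(k,l)$ directly as an approximate fixed point of the map $F\colon(k,l)\mapsto(\KS(x|l),\KS(y|x,k))$, getting existence and uniqueness in one stroke from the fact that $F$ maps points at distance $d$ to points at distance $O(\log d)$; it then proves $\KS(x,y|k,l)=k+l$ by elementary means --- the counting dichotomy from the standard Kolmogorov--Levin argument (made $O(1)$-precise because $k,l$ sit in the condition) for the lower bound, and concatenation for the upper bound --- and finally drops $l$ (or $k$) from the condition by observing that the length $k+l$ of the describing program together with the remaining value reconstructs the missing one. You instead funnel everything through prefix complexity: you convert both defining equations into Levin's form $\KP(\cdot\,|i,c)=i$, obtain $\KP(x,y|k,l)=k+l$ from relativized prefix additivity, and then read off all three equalities at once by noting that $j=k+l$ together with each condition $c\in\{(k,l),k,l\}$ is interdefinable with $(k,l)$; this is exactly the G\'acs-style derivation that the paper sketches as an alternative proof of Theorem~1, now applied to the Proposition, and it is legitimate since both ingredients (relativized Levin characterization and relativized prefix additivity) are available. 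What each approach buys: yours handles the three conditions uniformly and avoids redoing any counting, at the price of heavier relativized machinery; the paper's is more self-contained and produces the auxiliary facts $\KS(x|k,l)=k$ and $\KS(y|x,k,l)=l$ along the way. On existence and uniqueness you rely on the same underlying fact as the paper (conditional complexity is $O(\log d)$-Lipschitz as a function of an integer condition), and your intermediate-value argument and single-variable equation $H(l)=l$ are sound; note only that the detour through $H(l)=\KS(x,y|l)-\KS(x|l)$ is unnecessary --- applying the log-Lipschitz bound directly to two solution pairs, as the paper does with $F$, gives $d\le O(\log d)+O(1)$ and hence $d=O(1)$ immediately --- and the logarithmic modulus you single out as the hard step is the standard one-line fact that $l$ and $l+\Delta$ are mutually computable from a self-delimiting $O(\log\Delta)$-bit description of $\Delta$.
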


\begin{proof}
The pair in question is a fixed point of $F\colon (k,l)\mapsto (\KS(x|l),\KS(y|x,k))$. It exists and is unique since $F$ maps points at distance $d$ into points at distance $O(\log d)$. (Here ``distance'' means geometric distance between points in $\mathbb{Z}^2$.)

Using the relativized version of the statement $\KS(z)=\KS(z|\KS(z))$, we conclude that
$\KS(x|k,l)=k$ and $\KS(y|x,k,l)=l$. Let us prove now that $\KS(x,y|k,l)=k+l$. Indeed, the standard
proof of Kolmogorov--Levin theorem shows that for any $x,y,k',l'$ such that 
\begin{equation*}
  \KS(x,y|k',l') \leq k' + l' 
\end{equation*} 
we have either   
\begin{equation*} 
\KS(x|k',l') \leq k'
    \;\;\; \text{ or } \;\;\; \KS(y|x,k',l') \leq l' \,.  \end{equation*} Hence if $\KS(x|k,l) = k$
    and $\KS(y|x,k,l) = l$ for some $k$ and $l$, we have $\KS(x,y|k,l) \geq k+l$ (otherwise $k$ and
    $l$ can be decreased to get a contradiction). By concatenation we obtain also that $\KS(x,y|k,l)
    \le  k+l$, so $\KS(x,y|k,l)=k+l$ (all equations with $O(1)$-precision).
 
It remains to show that $\KS(x,y|k,l)=k+l$ implies $\KS(x,y|k)=k+l$ and, similarly,
$\KS(x,y|l)=k+l$. Indeed, a program of length $k+l$ that maps $(k,l)$ to $(x,y)$, can be used to map
$k$ (or $l$) to $(x,y)$: knowing the length of the program and one of the values of $k$ and $l$, we
reconstruct the other value.  \end{proof}

\begin{remark}
One can ask what can be said about pairs $(k',l')$ such that $\KS(x|l')\le k'$ and $\KS(y|x,k')\le l'$. 
The pair $(k,l)$ given by the theorem is not necessarily coordinate-wise minimal: for example,
taking a large $k'$ that contains full information about $y$ we may let $l'=0$. Indeed, $\KS(x|0)\le
k'$ (since $k'$ is large) and $\KS(y|x,k')\le 0$ (since $k'$ determines $y$). However, to get some
decrease in $k'$ (compared to $k$) or $l'$ (compared to $l$) we need to change the other parameter
by an exponentially bigger quantity, since the information distance between $i$ and $i'$ is
$O(\log|i-i'|)$. The change in the other parameter should be its increase, otherwise we could repeat
the arguments exchanging $k$ and $l$ and get a contradiction (each of two changes could not be
exponentially big compared to the other one). 

\end{remark}


\end{document}